\renewcommand\footnotetextcopyrightpermission[1]{}
\DeclareMathOperator{\mlb}{mlb}
\newcommand{\mathvar}[1]{\mathit{#1}}
\newcommand{\mathval}[1]{\mathtt{#1}}
\newcommand{\mathfn}[1]{\mathop{\mathvar{#1}}}
\newtheorem{theorem}{Theorem}
\newtheorem{lemma}{Lemma}
\begin{document}
%\title{Extended Version: On Extend-Only Directed Posets and Derived Byzantine-Tolerant Replicated Data Types}
\title[On Extend-Only Directed Posets and Derived Byzantine-Tolerant Replicated Data Types (Extended Version)]{On Extend-Only Directed Posets and Derived Byzantine-Tolerant Replicated Data Types {\Large(Extended Version)}}
%\title{On Extend-Only Directed Posets and Derived Byzantine-Tolerant Replicated Data Types}

\author{Florian Jacob}
\email{florian.jacob@kit.edu}
\orcid{0000-0002-5739-8852}
\affiliation{%
  \institution{Karlsruhe Institute of Technology}
  \streetaddress{Kaiserstraße 12}
  \city{Karlsruhe}
  \country{Germany}
  \postcode{76131}
}

\author{Hannes Hartenstein}
\email{hannes.hartenstein@kit.edu}
\orcid{0000-0003-3441-3180}
\affiliation{%
  \institution{Karlsruhe Institute of Technology}
  \streetaddress{Kaiserstraße 12}
  \city{Karlsruhe}
  \country{Germany}
  \postcode{76131}
}

\begin{abstract}
    % Claim: "Causal Event DAG CRDTs for Byzantine Environments should be generalized as Extend-only Directed Partially-ordered Sets"
    We uncover the extend-only directed posets (EDP) structure as a unification of recently discussed DAG-based Byzantine-tolerant conflict-free replicated data types (CRDT).
    We also show how a key-value map model
    can be derived from the EDP formulation, and give an outlook on an EDP-based systemic access control CRDT 
    as a formalization of the CRDT used in the Matrix messaging system.
\end{abstract}

\begin{CCSXML}
<ccs2012>
<concept>
<concept_id>10002978.10003006.10003013</concept_id>
<concept_desc>Security and privacy~Distributed systems security</concept_desc>
<concept_significance>500</concept_significance>
</concept>
<concept>
<concept_id>10011007.10010940.10010992.10010993.10010996</concept_id>
<concept_desc>Software and its engineering~Consistency</concept_desc>
<concept_significance>500</concept_significance>
</concept>
<concept>
<concept_id>10002978.10002991.10002993</concept_id>
<concept_desc>Security and privacy~Access control</concept_desc>
<concept_significance>300</concept_significance>
</concept>
</ccs2012>
\end{CCSXML}

\ccsdesc[500]{Security and privacy~Distributed systems security}
\ccsdesc[500]{Software and its engineering~Consistency}
\ccsdesc[300]{Security and privacy~Access control}

\keywords{Conflict-Free Replicated Data Types, Strong Eventual Consistency, Byzantine Fault Model, Matrix Event Graph}

\maketitle

\section{Introduction}
  % Claim: "Causal Event DAG CRDTs for Byzantine Environments should be generalized as Extend-only Directed Partially-ordered Sets"
Recently, we noticed \cite{JacobBeerHenze2021_1000129939} that the conflict-free replicated data type (CRDT) of a system for decentralized messaging (Matrix, \cite{matrix-spec}) retains its CRDT property also in environments with Byzantine nodes, i.e., in environments with nodes that arbitrarily deviate from the expected protocol behavior.
\citeauthor{kleppmann2022making}~\cite{kleppmann2022making} was able to show that an arbitrary crash-fault tolerant CRDT can be transformed into a CRDT that tolerates an arbitrary number of Byzantine nodes.
These proposals for the Byzantine case have in common that events are managed in a graph-like shared object: events are independently generated at the various replicas (available under partition), but associated to previous events known to the replica.

The shared objects essentially show the following dynamic behavior.
All replicas start with the same genesis event as initial state.
To append a new event, replicas attach this event to all events currently known to them that “happened-before” and have no “descendants”, i.e., the “most recent” events, without coordinating with other replicas.
Replicas then synchronize, i.e., exchange updates, to reach a consistent state again.
When replicas append events that happened concurrently, branches occur, which leads to a tree-like structure.
When replicas learn of branches, they will join them again on their next event, thereby eliminating branches.
Events consist of both payload as well as hashes of previous events, which ensures integrity and strong eventual consistency in the face of Byzantine equivocation~\cite{kleppmann2022making,JacobBayreutherHartenstein2022_1000146481}.
An example evolution over time of such an object is illustrated in \cref{fig:graph}.

While in previous work, the corresponding objects have been described as directed acyclic graphs (DAGs), in this work-in-progress paper we propose a more fundamental, unifying CRDT formalization via {\em directed partially-ordered sets} (directed posets), called Extend-only Directed Posets (EDPs), that shows the following advantages:
\begin{itemize}
  \item The set-theoretic formalization captures the essence of Byzantine-tolerant DAG CRDTs,
    revealing {\it a)} a state-based EDP that does not require crypto elements,
    and {\it b)} a hash-based operation-based construction as optimization for efficiency.
    In contrast to DAG-based variants, the EDP formulation makes use of standard mathematical notions
    and fits the theory of CRDTs that is largely formulated in terms of set theory~\cite{preguicca2018conflict}.
    Due to the close relation of set theory and especially lattice theory to boolean algebra~\cite{padmanabhan2008axioms}, we believe that EDPs might facilitate formal verification in future work.
  \item Based on the generic EDP formalization, other types can be derived and/or constructed by composition that can then more easily be shown to have the CRDT property in Byzantine environment.
    We give an example for a derived map type, and an outlook towards a composed CRDT for systemic access control, which represents a formalization of the CRDT used in Matrix.
\end{itemize}

%Blockchains also use a hash chain structure, but in contrast, blockchains coordinate to form a strongly-consistent, linearized chain of events, and pay by compromising availability under partition.

\begin{figure}[tbp]
\centering
\includegraphics[trim=6mm 86mm 174mm 51mm,clip,width=\linewidth]{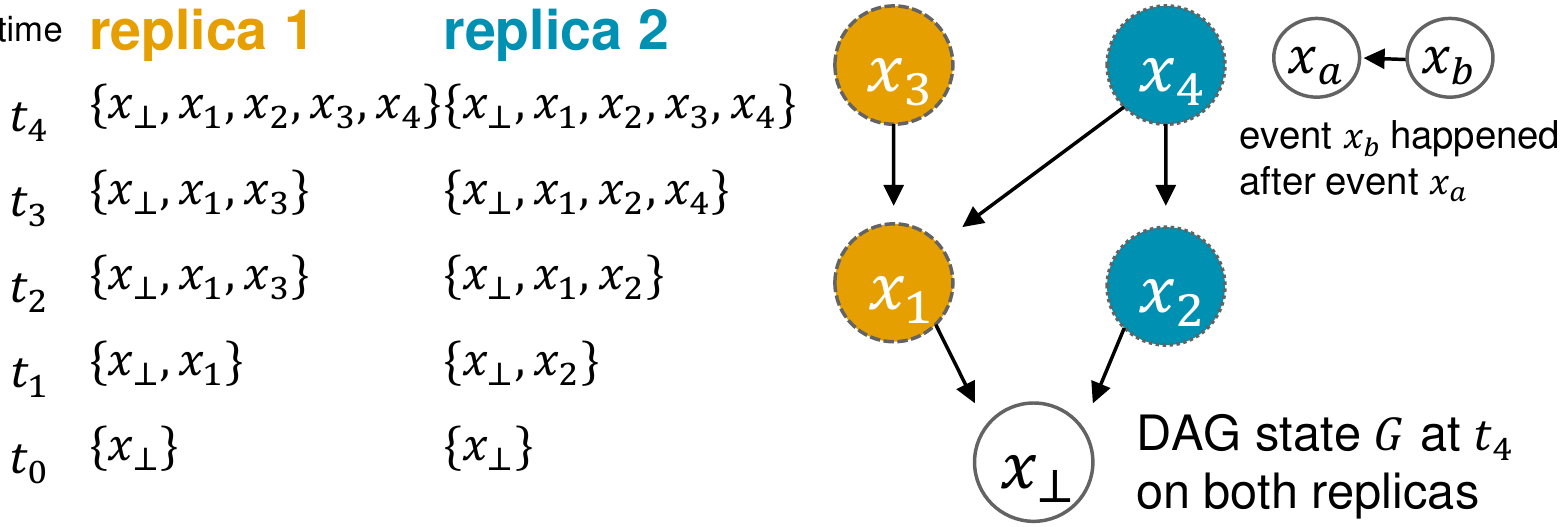}
\caption{Example graph state $G$ at a point in time $t_4$ as seen by both replicas and containing events from both replicas and the pre-shared genesis event $x_{\bot}$. First, both replicas send events $x_1$ and $x_2$ in concurrently. When replica 2 sends event $x_4$, it has learned about $x_1$ from replica 1 and therefore puts both $x_1$ and $x_2$ as happened-after parents.
Replica 1 has not learned of $x_2$ yet, and thereby appends $x_3$ only to $x_1$.}
\label{fig:graph}
\vskip-5mm
\end{figure}

%While the visual presentation of events and their happened-before ancestors is naturally depicted as a DAG, the formal treatment benefits from acknowledging the DAG representation as a Hasse diagram, i.e., as the reduction of the underlying directed partially-ordered set of events and their happened-after relation:
%the terminology of directed partially-ordered sets obviates the need to introduce notions like “roots”, “extremities”, or "ancestors", and the happened-after partial order precisely describes the meaning of how events relate to each other.

Related work made use of sets instead of DAGs in Byzantine environments, however, did not treat and generalize them as CRDTs~\cite{cholvi2021byzantine}.
In the following sections, we will define EDPs as CRDTs and indicate why they represent a solid basis for the class of Byzantine-tolerant CRDTs discussed above.

%We first present the formalization of the Extend-only Directed Set Replicated Data Type (EDS) together with a sketch of the proof of CRDT properties.
%Then, we derive models for causal events, key-value maps, and access control. In all cases we show proof ideas for Strong Eventual Consistency under an arbitrary number of Byzantine nodes, based on the proposed formalization.

\section{Extend-only Directed Posets (EDPs)}

\subsection{Basics and Notations}
%Instead of a causal append-only DAG of events, the EDS takes arbitrary kind of set items and puts them in an an arbitrary, directed, upward-extend-only partial order.
%The modeling takes its elegance from using the transitive partial order relation, allowing to use well-established concepts from order theory, while the DAG representation is still used for efficient transfer in form of the set's transitive reduction.
%The modeling also bring the Causal Event DAG ideas nearer to other classical CRDTs
% that have a partial-ordered set as state space itself like Replicated Growable Arrays or Add-Remove Partial Orders~\cite{preguicca2018conflict}, but which do not work in a Byzantine environment.

The state space of the EDP replicated data type is based on relational structures, for which we will now introduce mathematical basics and notation.

A relational structure $S = (X, R)$ is a tuple which consists of a ground set $X$ and a relation $R \subseteq X \times X$.
To facilitate notation, we sometimes write $S.X$ and $S.R$ to denote the set $X$ and the relation $R$ of $S=(X,R)$.
%For brevity, relations can be used with infix notation: $\forall a, b \in X: (a, b) \in R = aRb$.
%We use the dot operator like $S.X$ to reference the elements of an ordered tuple by their name.
A partially-ordered set, also called a poset, is a relational structure that is reflexive ($\forall a \in X\colon (a, a) \in R$), transitive ($\forall a, b, c \in X\colon (c, b) \in R \land (b, a) \in R \Rightarrow (c, a) \in R$), and antisymmetric ($\forall a, b \in X\colon (a, b) \in R \land (b, a) \in R) \Rightarrow a = b$).
While a $\leq$-like $R$ is usual in mathematics, we use a $\geq$-like $R$ in these definitions for consistency with the rest of the paper.
If $S$ is a poset, then $R$ is called a partial order relation, or partial ordering.
If $S$ is also strongly connected ($\forall a, b \in X\colon (a, b) \in R \lor (b, a) \in R$), $S$ is called a linearly-ordered set.

%The reflexive-transitive reduction $G = (X, R^{-})$ of any poset $S$ can be drawn as a directed acyclic graph. Conversely, the reflexive-transitive closure of $G$ is given by the poset $(X, (R^{-})^{+}) = (X, R) = S$.

A reflexive and transitive relational structure $S$ is called a downward-directed set if for any two elements, the set contains a lower bound, i.e., $\forall a, b \in X \colon \exists x_{lb} \in X \;\mbox{s.t.}\; (a, x_{lb}) \in R \land (b, x_{lb}) \in R$.
A finite, downward-directed poset is directed towards its unique least element $x_\bot$, also denoted as $S^\bot$, i.e., $x_\bot \in X$ and $\forall x \in X\colon (x, x_\bot) \in R$.
Conversely, a finite upward-directed poset is directed towards its top element $S^\top = x_\top \in X$, i.e., $ \forall x \in X \colon (x_\top, x) \in R$.
A both downward- and upward-directed finite poset is said to be a poset bounded by $x_\bot$ and $x_\top$.
The set of maximal elements of $S$ is $\max(S) = \{ m \in X | \forall x \in X \colon (x, m) \in R \Rightarrow (m, x) \in R \}$.
Conversely, the set of minimal elements of $S$ is $\min(S) = \{ m \in X | \forall x \in X \colon (m, x) \in R \Rightarrow (x, m) \in R \}$.

A relational structure $S'=(X',R') $ is called an extension of another relational structure $S$ if $X \subseteq X'$ and $ R = R'|_X$, where the restriction $R|_A$ is defined as usual as $R|_A =  R \cap (A \times A)$.
We call $S'$ an upward extension of a downward-directed poset $S$ if additionally $S'^\bot = S^\bot$.
The downward closure $y^{\downarrow S}$ of an element $y \in S.X$ is defined as $y^{\downarrow S} = \{c \in X | (y, c) \in R\}$.
The downward closure $Y^{\downarrow S}$ of a subset $Y \subseteq X$ is generalized from the single-element downward closure as $Y^{\downarrow S} = \bigcup_{y \in Y} y^{\downarrow S}$.
The upward closure of elements and subsets of $X$ is defined correspondingly, $y^{\uparrow S} = \{c \in X | (c, y) \in R\}$.
The set of maximal lower bounds of an element $y \in X$ is the set of maximal elements of the downward closure of $y$ without $y$ itself, $\mlb(y) = \max(y^{\downarrow S} \setminus \{y\})$.
% Beware: here and with the other operations on posets, it is implicit after which order they are performed by, which is the order of the poset to which they belong to.
% On first sight, this definition is ambiguous for u \in U as they are sets itself, but they are not sets of elmeents of U and thereby mlb(u) is single-element mlb.
%The maximal lower bounds of a set of elements $Y \subseteq X$ is generalized as $\mlb(Y) = Y_{\mlb} = (\bigcup_{y \in Y} \mlb(y))^-$.
%\cref{fig:set} illustrates these concepts with an example of before and after applying an upward extension, derived from the graph of \cref{fig:graph}.

\begin{figure}[tbp]
\centering
\includegraphics[trim=25mm 31mm 162mm 61mm,clip,width=\linewidth]{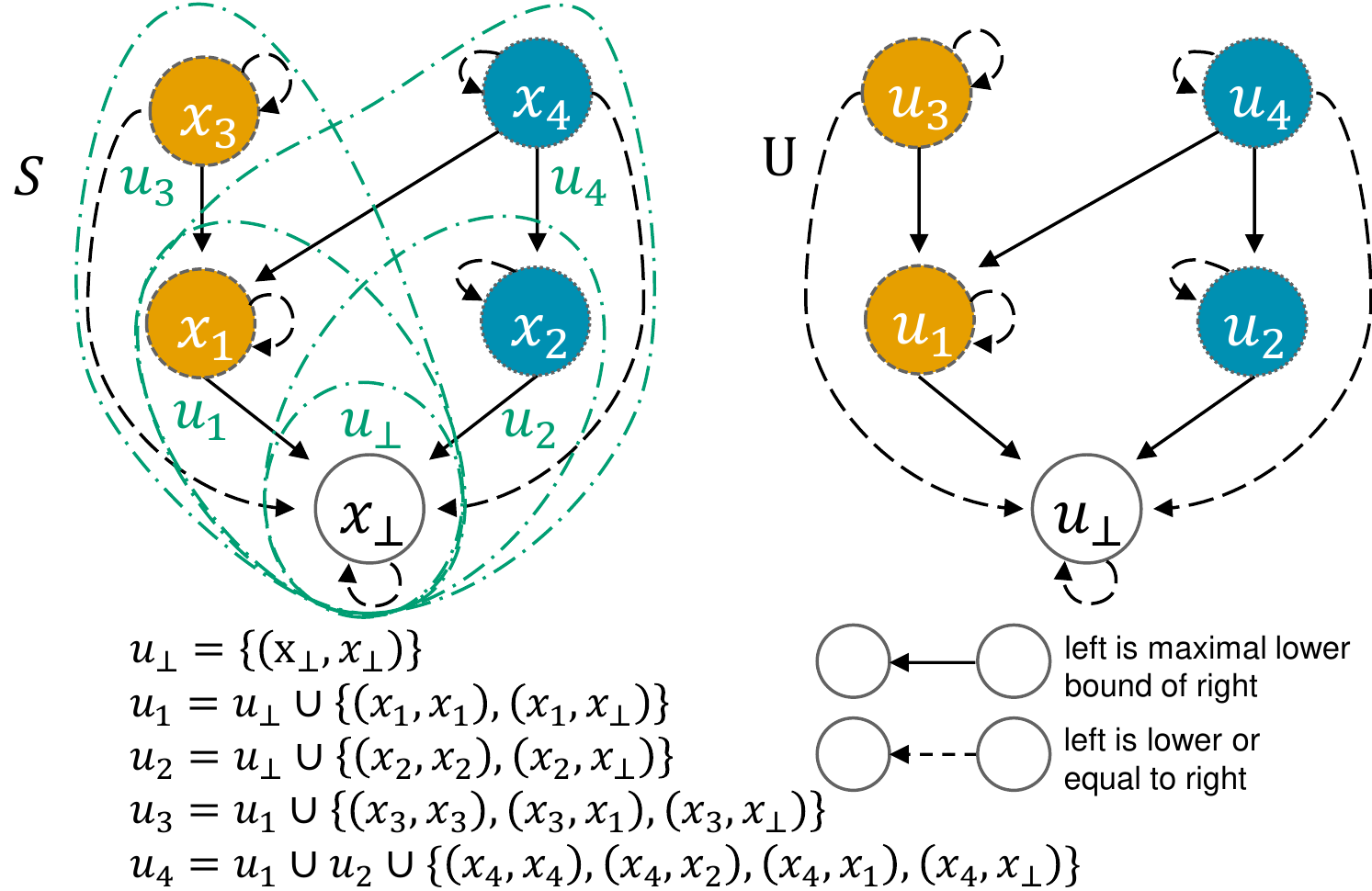}
\caption{A BDP state $S$ and equivalent EDP state $U$ based on \cref{fig:graph}.
An upward extension $u_i \in U$ is the set of its enclosed relations in $S$, to represent the formation history of $x_i \in S$.
%While $S$ is ordered by $R \in \mathbb{R}$ of the extend-only directed poset, $U$ is ordered by superset inclusion $\supseteq$.
}
\label{fig:homomorphism}
\end{figure}

\subsection{Specification}
\label{sec:edp-spec}
To build an append-only CRDT (as sketched in the introduction) that tolerates an arbitrary number of Byzantine nodes, the key idea is to replace an event $x$ with its downward closure, i.e., adding all the relations to previous events.
Thereby, a byzantine node cannot create inconsistent relations anymore --- it can only create spam additions.
The EDP RDT will be defined, therefore,  as `higher-order' directed posets, i.e., will represent directed posets of directed posets.
The resulting EDP RDT can then easily be shown to represent a state-based CRDT, for which an operation-based version can be constructed by using cryptographic hash functions as order-guaranteeing mechanisms.
For clarity, we denote as Basic Directed Posets (BDP) the directed posets that are used as `base layer' to construct the EDP RDT.

To fix a BDP, one selects a universe $\mathbb{X}$ of valid elements  and a universe $\mathbb{R} \subseteq \mathcal{P}(\mathbb{X}^2)$ of valid relations.
BDP states are finite, downward-directed posets $S = (X, R)$, $X \subseteq \mathbb{X}$, $R \in \mathbb{R}$,
with a common bottom element $S^\bot$ (to be interpreted as the initial state of all correct replicas).
%As $S^\bot$ is thereby part of every possible correct replica state, it is used as common synchronization anchor.
As example, one can think of the universe $\mathbb{X}$ as application-layer messages, and the universe $\mathbb{R} \subseteq \mathcal{P}(\mathbb{X}^2)$ as causal orders, and a specific bootstrapping application-layer message as bottom element.

To define extend-only operations, we only want to allow single-element upward extensions of a BDP state $S$.
We want finality as required validity criterion for extensions: Adding a new element to the BDP and adding the relations of the new element needs to be a single, atomic upward extension, i.e., cannot be changed later.
%This finality requirement implies that for each element $x \in S.X$, the sub-poset $x^{\downarrow S}$ of $S$ bounded by $x_\bot$ and $x$ needs to be final in the BDP.
The required finality property can be expressed as follows: when element $x$ is the single element that has been added to gain state $S$, for all upward extensions $S'$ of $S$ needs to hold\footnotemark{}: $x^{\downarrow S'} = x^{\downarrow S}$ and $R'|_{x^{\downarrow S'}} = R|_{x^{\downarrow S}}$.
\footnotetext{In contrast, the upward closure of an element and their relations might be never final: $x^{\uparrow S'} \supseteq x^{\uparrow S}$ and $R'|_{x^{\uparrow S'}} \supseteq R|_{x^{\uparrow S}}$. A new upper element can always be in transit or purported to have been in transit.}
%\footnote{Potentially, a Byzantine-tolerant EDS could allow non-final relations without compromising security, but we leave that for future work.}.

To provide this finality notion, the obvious (but inefficient) approach is to bind a single-element upward extension, which extends $S = (X, R)$ with an element $x \in \mathbb{X}$, to all single-element upward extensions for elements $y$ with $x \geq y$.
This way, the history of extend-only operations is fixed and serves as identity-forming information for the new single-element upward extension.
To express sets of `formation histories' of BDP elements, one has to move to sets of posets.
Thus, for the EDP definition, we move `one level up' and map those BDP posets to elements of the EDP as done in the following two steps.
An illustration of the BDP to EDP state mapping based on the state in \cref{fig:graph} is found in \cref{fig:homomorphism}.

{\em Step 1.} Let the directed poset $S'=(X',R')$ be the upward extension with a single element $x \in \mathbb{X}$ of a directed poset $S=(X,R)$ of the BDP.
The downward closure $x^{\downarrow S'}$ with $R'$ restricted to $x^{\downarrow S'}$ is a sub-poset $(x^{\downarrow S'}, R'|_{x^{\downarrow S'}}) \subseteq S'$ bounded by $x_\bot$ and x.
Let $u_x := R'|_{x^{\downarrow S'}}$ denote this relation of such a BDP upward extension with $x$, and $X(u_x) := \{y \in \mathbb{X} | (y, y) \in u_x\} = x^{\downarrow S'}$ the set of reflexive pairs in $u_x$.
We can derive from $u_x$ the upward extension $S'$ of $S$ with $x$ as $S' = (X \cup X(u_x), R \cup u_x)$.
Thereby, as shorthand notation, we call $u_x$ an upward extension as well.
An upward extension $u_x \in \mathbb{R}$ is valid if $(X(u_x), u_x)$ forms a BDP sub-poset of $S'$ bounded by $x_\bot$ and~$x$.
While we focus on its relation, the bounded sub-poset is an alternative upward extension shorthand.

{\em Step 2.} We now move `one level up' and define the EDP by using upward extensions $u_x$ as elements and subset relations to form posets of those upward extensions.
An EDP state $U \in \mathcal{P}(\mathbb{R})$ is the set of single-element upward extensions $U = \{u_\bot, u_1, \mathellipsis\}$, i.e., the formation history of BDP state $S$.
The initial EDP state is $U = \{ u_\bot = \{ (x_\bot, x_\bot) \} \}$, which is the upward extension of the empty set with the bottom element.
An EDP state $U \in \mathcal{P}(\mathbb{R})$ is valid if $(U, \supseteq|_U)$ is a $u_\bot$-directed poset and $\forall u \in U\colon \mlb(u) \subsetneq U \land |X(u)| = |\{ X(\bigcup \mlb(u))| + 1$, i.e., every upward extension in $U$ also has its maximal lower bounds in $U$, and extends the BDP state with exactly one element that is not present in any of its maximal lower bounds.
\cref{fig:set} provides an illustration of applying an upward extension, derived from the graph of \cref{fig:graph}.
An EDP state $U$ can be transformed back to a directed poset $S(U)$ on the underlying BDP via $S(U) = (X(\bigcup U), \bigcup U)$.
The above EDP definition leads to a {\em state-based CRDT}
with state space $\mathcal{P}(\mathbb{R})$ and set union $U_1 \cup U_2$ as join, as shown in~\cref{th:state-edp-sec}.

\begin{theorem}
\label{th:state-edp-sec}
Assuming a connected component of all correct replicas and eventual communication among the component,
the state-based EDP is a Conflict-free Replicated Data Type
even in face of an arbitrary number of Byzantine replicas.
\end{theorem}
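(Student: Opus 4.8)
The plan is to recognise the state-based EDP as a textbook convergent (state-based) CRDT and then argue that none of the ingredients of the CvRDT $\Rightarrow$ strong-eventual-consistency argument is something a Byzantine replica can subvert. First I would fix the CvRDT data: the state space $\mathcal{P}(\mathbb{R})$ ordered by $\subseteq$, which is a join-semilattice whose binary join is exactly set union, a commutative, associative and idempotent operation; and the only local mutation an EDP replica performs, namely generating a fresh valid single-element upward extension $u_x$ (whose maximal lower bounds the replica already holds) and adding it, $U \mapsto U \cup \{u_x\}$, which is inflationary since $U \subseteq U \cup \{u_x\}$. From these two facts the standard convergence argument for state-based CRDTs~\cite{preguicca2018conflict} applies verbatim: any two replicas that have incorporated the same set of upward extensions --- whether generated locally or received from others --- compute the same union and hence hold the same state, so under eventual delivery the replicas converge.

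The Byzantine part then reduces to observing that this argument uses nothing an adversary can break. Unlike operation-based CRDTs, a state-based CRDT needs no causal or reliable broadcast, so the equivocation-and-delivery attacks that break naive op-based replication (and that motivated the hash-DAG constructions in the first place)~\cite{kleppmann2022making} simply do not apply. The single assumption to defend is that the objects being merged really are elements of the state space $\mathcal{P}(\mathbb{R})$: a correct replica, on receiving a purported state from any (possibly Byzantine) replica, first runs the locally decidable check that it is a finite set of relations drawn from $\mathbb{R}$, discarding what fails, and only then merges by $\cup$. Crucially --- and this is the payoff of the finality construction of \cref{sec:edp-spec} --- each upward extension $u_x = R'|_{x^{\downarrow S'}}$ is self-contained: it describes only $x$'s own formation history, so a Byzantine replica can at worst inject extra, individually consistent upward extensions (``spam additions''), never an extension that rewrites another element's relations, and two correct replicas that have received the same upward extensions agree on them exactly. (If one additionally wants correct replicas to stay EDP-valid, one checks that the union of valid states is valid: $u_\bot$ is the common least element, so $(U_1 \cup U_2, \supseteq)$ is still $u_\bot$-directed, and finality together with the recursive clause of the validity predicate --- every $u_z \subseteq u_x$ already lies in the state containing $u_x$, reached by walking down a maximal chain from $x$, each step an $\mlb$ step --- keeps the $\mlb$ and cardinality conditions intact, provided $\mathbb{R}$ is fixed so that valid upward extensions do not disagree about an element's history, which the hash-based refinement enforces structurally.) Under the hypothesis of a connected correct component with eventual communication, correct replicas relay their states, so the set of delivered upward extensions converges across all correct replicas, and hence so do their states --- which is precisely the strong-eventual-consistency / CRDT guarantee in the presence of an arbitrary number of Byzantine replicas.

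I expect the genuine obstacle to be conceptual rather than computational: pinning down the right reading of ``CRDT under Byzantine faults'' --- convergence among correct replicas that validate their inputs, \emph{not} preservation of the application-level BDP semantics, which Byzantine spam can obviously pollute --- and making explicit in the write-up that the whole point of moving ``one level up'' to posets of upward extensions is to turn an append-only log, which naively wants Byzantine-fragile causal op delivery, into a state-based CRDT whose convergence proof is robust by construction. The one place to be careful operationally is the sanitisation step, since the entire robustness claim rests on every merged object being a genuine element of $\mathcal{P}(\mathbb{R})$ before $\cup$ is applied.
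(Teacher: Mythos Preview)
Your proposal is correct and follows essentially the same approach as the paper: both identify $(\mathcal{P}(\mathbb{R}),\cup)$ as a join-semilattice, invoke the standard state-based CRDT convergence argument under eventual gossip among the connected correct component, and then dispose of the Byzantine case by noting that updates carry no forgeable metadata, invalid states are locally rejected, and the self-contained formation history of each upward extension means an adversary can at worst inject additional valid extensions (spam) rather than corrupt existing ones. Your write-up is considerably more detailed than the paper's brief sketch---in particular, your explicit discussion of inflationary local updates, the sanitisation step, and closure of validity under union goes beyond what the paper spells out---but the underlying line of argument is the same.
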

\begin{proof}
For the state space $\mathcal{P}(\mathbb{R})$, set union $U_1 \cup U_2$ is the least upper bound of $U_1, U_2 \in \mathcal{P}(\mathbb{R})$, whereby $\mathcal{P}(\mathbb{R})$ and set union form a join-semilattice.
%As $U_1 \subseteq U_1 \cup U_2$ and $U_2 \subseteq U_1 \cup U_2$, the join-semilattice is monotonic.
Via periodic state gossiping, eventual delivery is fulfilled.
Together with termination from the purely mathematical state join, the state-based EDP is a CRDT~\cite{shapiro2011conflict}.
As updates only consist of a semilattice element, there is no metadata to forge, and Byzantine replicas are limited:
Invalid Byzantine updates are rejected as not part of the semilattice, and due to the full formation history protecting the integrity of valid updates, any Byzantine attempt to alter history, equivocate, or otherwise harm consistency, is equivalent to multiple valid updates which were just not successfully sent to all replicas~\cite{JacobBayreutherHartenstein2022_1000146481}.
\end{proof}

While the modeling as directed posets might introduce some formalism, the above formulation of the EDP CRDT uncovers the simple set-theoretic fundamentals, i.e., the state-based structure, of Byzantine-tolerant DAG CRDTs.
Based on this set-theoretic view, \cref{th:state-edp-sec} is easily shown without needing crypto elements like hashes for Byzantine tolerance, as the full relation among set elements is sufficient for integrity.
However, from a practical point of view, a state-based EDP is highly inefficient: replicas continuously gossip their full state $U$, which will only increase in size.
To reach the efficiency of DAGs with EDPs, we now work up our way from the state-based formulation to an efficient operation-based EDP CRDT formulation that makes use of cryptographic hashes for integrity protection.
The construction follows \cite{JacobBeerHenze2021_1000129939,kleppmann2022making}, its purpose here is to clarify the relationship between operation-based and state-based formalizations.
We need two optimizations for which we give an intuition now, and a formalization next.
Resilient operation broadcast follows in~\cref{sec:broadcast}, and a proof sketch for Byzantine Strong Eventual Consistency follows in~\cref{sec:edp-sec}.
As first optimization, an upward extension $u_y$ represents a sub-poset of $S'$ bounded by $x_\bot$ and $y$, thus, sending only the relations of that sub-poset instead of the state $U'$ is sufficient.
%However, the sub-poset might still be as large as $U'$ itself.
For the \emph{operation-based formulation}, we then compress the subset, reducing worst-case update size from “depth” to “width” of $U'$:
We define an operation $o$ consisting of $y$ and the set of hashed maximal lower bounds of $u_y$ in $U'$.
As $\mlb(u_y) \subseteq U$, $u_y$ can be reconstructed based on operation $o$ and existing state $U$.
%However, to efficiently support extend-only directed sets without an upper bound on size, we are interested in an {\em operation-based} version that is still secure against an arbitrary number of Byzantine nodes.

\begin{figure}[tbp]
\centering
\includegraphics[trim=25mm 62mm 152mm 40mm,clip,width=.9\linewidth]{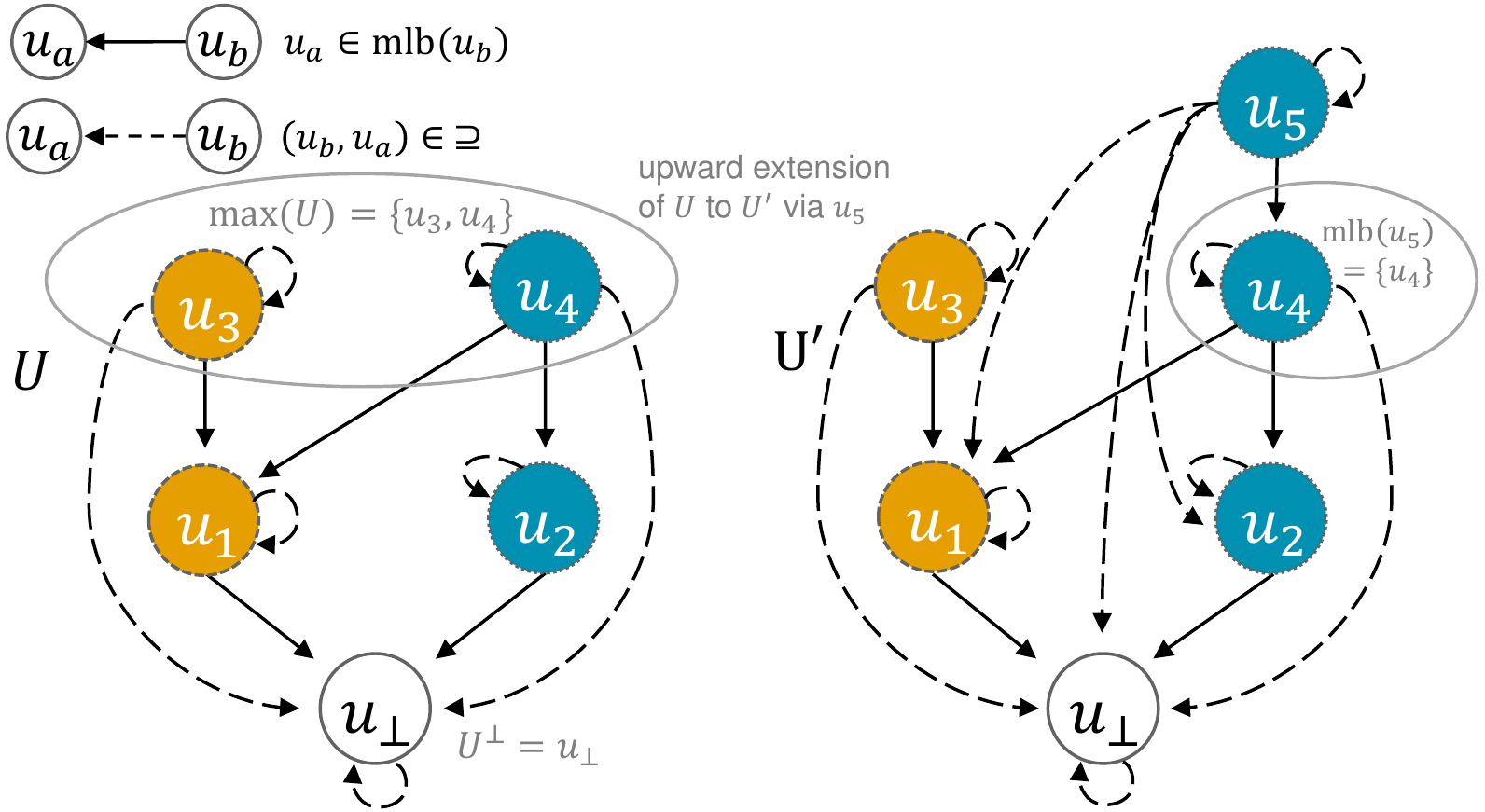}
\caption{Example of a replica state $U$ of an EDP, and state $U'$ that results from an upward extension of $U$ with $u_5$.}
\label{fig:set}
\end{figure}

% preimage resistance: first invent hash value h(o), then find o, clashes with happened-before
% second preimage resistance: find an operation that is ambiguous to the hash of a correct node's operation
% collision resistance: find two malevolent operations for which equivocation cannot be detected
To chain $y$ to its maximal lower bounds, a preimage and collision resistant hash function $h(a, b)$ is used that returns the hash of its chained arguments.
We recursively define the set of hashes of a set of maximal lower bounds $U'_{\mlb}$ as $H(U'_{\mlb}) := \{ h(\, \max(X(u)),\, H(\mlb(u)) \,) | u \in U'_{\mlb} \}$.
To define the hash of the set of hashes in the second argument of $h$, one can concatenate its linearization gained by lexicographical sorting.
An operation $o:=(y, H_{\mlb})$ is then defined as compression of upward extension $u$ via $o(u) := (\, \max(X(u)),\, H(\mlb(u)) \,)$.
The hashed  maximal lower bound set recursively protects the integrity of the directed sub-poset down to the bottom element $u_\bot$ against Byzantine nodes.
To reconstruct upward extension $u$ from operation $o$ based on knowledge of current state $U$, we need the set of relations of all of $o.y$'s predecessors $P := (\bigcup U'_{\mlb} \subseteq U | H(U'_{\mlb}) = o.H_{mlb})$.
Then, the upward extension is the union of the reflexive relation of $y$, the relation of $y$ to its predecessor elements $X(P)$, and its transitive relation, i.e., its predecessor relations $P$.
We formalize this reconstruction as $u(y, P) := \{ y \}^2 \cup \{ y \} \times X(P) \cup P$.
An operation $o$ can be locally applied as soon as all maximal lower bounds are part of the replica state, i.e., $o.H_{\mlb} \subseteq H(U)$.
An operation $o$ is applied to state $U$ by $U' = U \cup u(o.y, P)$,
inheriting commutativity from set union, which makes for an operation-based CRDT with state space $\mathcal{P}(\mathbb{R})$.
Update size is now bound to the maximum size of a maximal lower bound set of any element in the downward closure, instead of overall state size.
%The development of the number of maximal lower bounds is discussed in~\cite{JacobBeerHenze2021_1000129939}.

The operation-based EDP is given in~\cref{alg:edprdt}.
Encountering an unsatisfied assertion, the algorithm stops processing the function and returns an error.
%The set is initialized with a genesis element $\vec{x}_\bot = (x_\bot, \emptyset)$ that has no lower bounds, i.e., that will always be the bottom to which the set is directed towards.
%The genesis element is assumed to be common knowledge among participating replicas.
Encountering an unsatisfied await, the algorithm interrupts to await its potential future satisfaction, without blocking subsequent function calls.
The \texttt{extend} function is used to extend the current state $U$ with a new upward extension $u \in \mathbb{R}$.
The side-effect-free \texttt{generate} function generates update operation $o$ and broadcast it to all replicas, including itself.
Received broadcasts are processed by the \texttt{effect} function, which awaits all maximal lower bounds %(and thereby the downward-directed sub-poset of the downward closure)
to be part of the current state before applying the operation.
Byzantine tolerance is sketched in \cref{sec:edp-sec}, and requires a weak resilient broadcast outlined next.

\begin{algorithm}[t]
\caption{Op-Based EDP CRDT for BDP $(X \subseteq \mathbb{X}, R \in \mathbb{R})$}
\label{alg:edprdt}
\begin{algorithmic}
\Require universe of valid BDP elements $\mathbb{X}$
\Require universe of valid BDP relations $\mathbb{R} \subseteq \mathcal{P}(\mathbb{X}^2)$
\ReplicaState{set of upward extensions $U \in \mathcal{P}(\mathbb{R})$}
\Initial{$U \gets \{ u_\bot = \{ (x_\bot, x_\bot) \} \}$}

\Query{bot}{}{$u_\bot = U^\bot \in U$}
\EndQuery

\Query{max}{}{$U_{\max} = \max(U) \subseteq U$}
\EndQuery

\Query{mlb}{$u \in U$}{$U_{\mlb} = \mlb(u) \subsetneq U$}
\EndQuery

\Query{S}{}{directed poset $S = (X \subseteq \mathbb{X}, R \in \mathbb{R})$}
  \State $S \gets (X(\bigcup U\}, \bigcup U)$
\EndQuery

\Statex
\Update{extend}{$u \in \mathbb{R}$}
  \Generate{$u \in \mathbb{R}$}
    \Assert{$\emptyset \neq u \notin U$}
    \Assert{$\emptyset \neq \mlb(u) \subsetneq U$}
    \State $o \gets (\max(X(u)), H(\mlb(u)))$
  \EndGenerate
  \Effect{$o = (y \in \mathbb{X}, H_{\mlb})$}
    \Await{$H_{\mlb} \subseteq H(U)$} \Comment{await effect of updates $\subsetneq u$}
    \Assert{$H_{\mlb} \neq \emptyset$}
    \State $P \gets \bigcup (U'_{\mlb} \subseteq U | H(U'_{\mlb}) = H_{mlb})$
    \State $U \gets U \cup (\{ y \}^2 \cup \{ y \} \times X(P) \cup P)$
  \EndEffect
\EndUpdate

\end{algorithmic}
\end{algorithm}

\subsection{Resilient Broadcast of Operation-Based Updates}
\label{sec:broadcast}
The operation-based EDP formulation does not need the strong guarantees of a crash-/Byzantine-fault reliable broadcast.
As explained in~\cite{kleppmann2022making}, mere eventual delivery of updates is sufficient for such Byzantine Sybil-resistant CRDTs, %regardless of order or duplicates,
as long as % an update is received by one correct replica and
correct replicas form a connected component.
Due to the shared bottom element and relation directedness,
broadcasting the set of maximal elements is sufficient,
as missing elements can be iteratively queried from other replicas and integrity-verified via the hash chain.
An optimized broadcasting approach in this spirit is found in~\cite{kleppmann2022making}.

When the set of maximal upward extensions $\hat{U} = \max(U)$ of the replicas' current state $U$ (or their space-efficient operation equivalents) is gossiped regularly, the broadcast is a state-based CRDT itself:
The state space forms a join-semilattice with join of $\hat{U}_1$ and $\hat{U}_2$ being $\max(\hat{U}_1 \cup \hat{U}_2)$.
While update size is close to EDP size in worst case,
if all replicas are correct, update size converges quickly to approximately the number of involved replicas~\cite{JacobBeerHenze2021_1000129939}.
Byzantine replicas can send a large set of made-up extensions, but as the resulting state-based broadcast is a Byzantine Sybil-resistant CRDT~\cite{JacobBayreutherHartenstein2022_1000146481}, they can only attack performance but not correctness.

\subsection{Op-based Byzantine Strong Eventual Consistency}
\label{sec:edp-sec}
To show Byzantine strong eventual consistency of the op-based EDP RDT, we use the strong eventual consistency (SEC) notion as defined in~\cite{kleppmann2020byzantine} consisting of three properties:
a) \textbf{Self-update}, i.e., iff a correct replica generates an update, it applies that update to its own state; b) \textbf{Eventual update}, i.e., for any update applied by a correct replica, all correct replicas will eventually apply that update; and c) \textbf{Strong Convergence}, i.e., any two correct replicas that have applied the same set of updates are in the same state.

\begin{lemma}
\label{lm:directedness}
Let $U \in \mathcal{P}(\mathbb{R})$ be a directed poset corresponding to $S = (X \subseteq \mathbb{X}, R \in \mathbb{R})$, and $U'$ ($S'$ resp.) the resulting state after applying the update $o = (y, H_{\mlb})$ for upward extension $u$.
Then $U'$ ($S'$ resp.) are a partially-ordered extensions of $U$ ($S$ resp.) directed towards the same element $U^\bot$ ($S^\bot$ resp.).
\end{lemma}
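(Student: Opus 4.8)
The plan is to handle the EDP level ($U\to U'$) and the BDP level ($S\to S'$) separately, and on each to decompose ``partially-ordered extension directed towards the same bottom'' into: (a) being a relational extension, (b) being a poset, and (c) retaining the same bottom element.

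On the EDP level, (a) and (b) are essentially free. Applying the update $o=(y,H_{\mlb})$ adjoins exactly one reconstructed upward extension $u=u(o.y,P)$, so $U'=U\cup\{u\}\supseteq U$; and since the EDP ordering is the subset relation, which is a partial order on \emph{any} family of sets, $(U',\supseteq|_{U'})$ is a poset whose restriction to $U$ is $(U,\supseteq|_U)$. For (c) I would show $u\supseteq U^\bot=u_\bot=\{(x_\bot,x_\bot)\}$: the \texttt{await} guard in \texttt{effect} ensures the nonempty set $\mlb(u)$ of reconstructed maximal lower bounds already lies in $U$, and $U$ is directed towards $u_\bot$, so $(x_\bot,x_\bot)\in w$ for every $w\in\mlb(u)$, hence $(x_\bot,x_\bot)\in P=\bigcup\mlb(u)\subseteq u$; thus $u_\bot$ remains a bottom element of the finite downward-directed poset $U'$, hence its unique least element.

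On the BDP level I would work with $S(U')=(X(\bigcup U'),\bigcup U')$ and $\bigcup U'=\bigcup U\cup u=R\cup u$. Since each $w\in\mlb(u)\subseteq U$ satisfies $w\subseteq\bigcup U=R$, we have $P=\bigcup\mlb(u)\subseteq R$, so the only genuinely new pairs in $u=\{y\}^2\cup\{y\}\times X(P)\cup P$ are $\{(y,y)\}$ and $\{y\}\times X(P)$, all involving the single top element $y=\max X(u)$ of the upward extension. Since $u$ is a valid single-element upward extension, $y$ is by definition a new element, i.e.\ $y\notin X=X(\bigcup U)$ --- concretely, $y$'s identity is pinned to its formation history by the hash chain, so $y\in X$ would force $u\in U$, contradicting the \texttt{generate} assertion. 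Hence the new pairs miss $X\times X$, which gives $X\subseteq X(\bigcup U')$ and $(\bigcup U')|_X=R$, i.e.\ $S'$ is a relational extension of $S$; directedness of $S'$ towards $S^\bot=x_\bot$ then follows since every old element already reaches $x_\bot$ within $R\subseteq\bigcup U'$, and $x_\bot\in X(P)$ (as $(x_\bot,x_\bot)\in P$) gives $(y,x_\bot)\in u$.

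The step I expect to be the main obstacle is (b) for the BDP: that $S'=(X(\bigcup U'),\bigcup U')$ is again a poset. Reflexivity is immediate, but transitivity and antisymmetry need that the reconstruction $u(o.y,P)$ actually equals the intended BDP upward extension $R'|_{y^{\downarrow S'}}$, i.e.\ that the hash-based compression in \texttt{generate} and its inversion in \texttt{effect} are faithful. For this I would invoke (i) the validity of $U$ as an EDP state: by the finality property each $w\in U$ equals $R$ restricted to a downward-closed, $x_\bot$-rooted sub-poset $X(w)$, so $X(P)=\bigcup_{w\in\mlb(u)}X(w)$ is downward closed in $S$ and $P=R|_{X(P)}$ is transitive; and (ii) collision- and preimage-resistance of $h$, which ensures $P$ collects exactly the relations below $y$. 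Granting this, $\bigcup U'=R\cup(\{y\}\times(\{y\}\cup X(P)))$ is exactly the relation obtained from $R$ by adjoining the fresh element $y$ above the transitively- and downward-closed set $X(P)$, which is the \emph{Step 1} single-element upward extension of the finite downward-directed poset $S$ and hence again a finite downward-directed poset directed towards the same least element $x_\bot=S^\bot$ --- closing the argument.
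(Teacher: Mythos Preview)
Your proposal is correct and follows essentially the same line as the paper's own argument, only worked out in far greater detail. The paper's proof is deliberately a short sketch: it dispatches the invalid case via the assertions, and for the valid case observes that $P$ is a union of $u_\bot$-directed sub-posets and that $\{y\}\times X(P)\subsetneq u$, from which it concludes both $\bot$-directedness and the partial-order properties for $U'$ and $S'$ in one breath. Your decomposition into (a) extension, (b) poset, (c) same bottom on each of the EDP and BDP levels, and your explicit invocation of the \texttt{await}/\texttt{assert} guards, the finality/validity of $U$, and hash collision-resistance to justify that $u(o.y,P)$ faithfully reconstructs the intended single-element upward extension, are exactly the details the paper's sketch elides; nothing in your route is conceptually different, and the crux --- that each $w\in\mlb(u)$ is an $x_\bot$-rooted, downward-closed sub-poset, so adjoining $y$ above $X(P)$ preserves poset-ness and the bottom --- is the same in both.
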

\begin{proof}
Invalid upward extensions are discarded via assertions, and then $U' = U$.
Valid upward extensions only add a single element $y$ to $S$.
Due to $P$ being the union of $u_\bot$-directed sub-posets and $\{ y \} \times X(P) \subsetneq u$, $U'^\bot = U^\bot$ and also $S'^\bot = S^\bot$, i.e., $\bot$-directedness is kept.
Due to the same argument, the partial-order properties also still hold for both $U'$ and $S'$.
\end{proof}

\begin{theorem}
\label{th:edp-sec}
Under the assumption of a connected component of all correct replicas and eventual communication among the component,
the op-based EDP is a Conflict-free Replicated Data Type
even in face of an arbitrary number of Byzantine replicas.
\end{theorem}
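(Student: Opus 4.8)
The plan is to establish the three properties of strong eventual consistency from~\cite{kleppmann2020byzantine} --- self-update, eventual update, and strong convergence --- directly against \cref{alg:edprdt}, reusing \cref{lm:directedness} for the structural invariant that every reachable state is a $u_\bot$-directed poset, and reusing the semilattice argument of \cref{th:state-edp-sec} wherever the op-based effect mirrors the state join. Byzantine behaviour is then handled as a fourth, separate step, arguing that a Byzantine replica's only lever is the content of the broadcast operations $o$, and that any such $o$ is either rejected by the assertions/awaits or is indistinguishable from a delayed-but-valid update.

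For \textbf{self-update}, I would observe that when a correct replica calls \texttt{extend}($u$), the \texttt{generate} block passes its assertions only if $\mlb(u)\subsetneq U$ already holds, and the emitted $o=(\max(X(u)),H(\mlb(u)))$ is broadcast back to the replica itself; hence in its own \texttt{effect} the guard $H_{\mlb}\subseteq H(U)$ is immediately satisfied, $P$ reconstructs to $\bigcup\mlb(u)$, and $U$ is updated by exactly $u(o.y,P)=u$. For \textbf{strong convergence}, I would note that each \texttt{effect} is the set union $U\cup u(o.y,P)$, with $u(o.y,P)$ a function of $o$ alone once $P$ is pinned down; preimage and collision resistance of $h$ ensure that the set $U'_{\mlb}\subseteq U$ with $H(U'_{\mlb})=o.H_{\mlb}$ is unique, so $P$ is determined. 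Set union being commutative, associative, and idempotent, and \cref{lm:directedness} guaranteeing every intermediate state stays a $u_\bot$-directed poset, two correct replicas that applied the same set of operations hold the same $U$ --- equivalently, they land on the same element of the join-semilattice $\mathcal{P}(\mathbb{R})$ used in \cref{th:state-edp-sec}, since the op-based effect realises the state join restricted to single-element upward extensions.

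For \textbf{eventual update}, the argument runs through the resilient broadcast of \cref{sec:broadcast}: under the connected-correct-component and eventual-communication assumption, regularly gossiping $\max(U)$ together with iterative back-querying and hash-chain verification of missing extensions delivers every upward extension applied by any correct replica to every correct replica. I would then discharge the \texttt{await} by induction on the depth of $u$ in the finite $\mlb$-chain down to $u_\bot$: the base case $u_\bot$ has empty guard and is present initially; in the inductive step, once all maximal lower bounds of $u$ have been applied (they are strictly shallower, hence covered by the hypothesis), $H_{\mlb}\subseteq H(U)$ becomes true, the assertion $H_{\mlb}\neq\emptyset$ holds because valid extensions have non-empty $\mlb$, and $u$ is applied.

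For \textbf{Byzantine tolerance}, I would argue that an operation $o$ that does not reconstruct to a valid upward extension --- empty $H_{\mlb}$, or an $H_{\mlb}$ for which no preimage set ever appears in any correct replica's state --- is rejected by the assertion or remains forever awaited, exactly as an invalid element is excluded from the semilattice in \cref{th:state-edp-sec}; and an $o$ that does reconstruct to a valid $u$ carries, via $H(\mlb(u))$ recursing to $u_\bot$, the full formation history of $u$, so equivocation and history alteration are impossible and such an $o$ is equivalent to a valid update a correct replica merely failed to deliver earlier, as in~\cite{JacobBayreutherHartenstein2022_1000146481}. Thus the three SEC properties survive an arbitrary number of Byzantine replicas. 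I expect the eventual-update step to be the main obstacle: it is where the liveness of the resilient broadcast, the finiteness of the $\mlb$-chain, and the impossibility of a Byzantine-crafted $H_{\mlb}$ stalling a correct replica on a genuinely applied operation must be combined carefully, rather than the essentially routine semilattice reasoning of the other two properties.
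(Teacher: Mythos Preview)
Your proposal is correct and follows essentially the same approach as the paper: both establish the three SEC properties of~\cite{kleppmann2020byzantine} via \cref{lm:directedness}, commutativity of set union, the resilient broadcast of \cref{sec:broadcast}, and hash-chain integrity protecting the downward closure. The only cosmetic difference is that you factor Byzantine tolerance out as a separate fourth step whereas the paper weaves the Byzantine argument (in particular the equivocation case $(y_a,H^a_{\mlb})$ vs.\ $(y_b,H^b_{\mlb})$) into each of the three properties, and your induction on $\mlb$-depth for discharging the \texttt{await} is more explicit than the paper's prose --- but the content is the same.
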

\begin{proof}
\textbf{Self-update:}
With a broadcast as described in \cref{sec:broadcast}, the broadcasting replica receives the update without waiting for any acknowledgment.
The \texttt{generate} function creates an update for which the await precondition in \texttt{effect} is immediately satisfied, which means that the \texttt{effect} function directly updates the replica's own state.
Byzantine nodes have no attack vector to interfere with this process.
\textbf{Eventual update:} Using a broadcast as described in \cref{sec:broadcast}, as soon as one correct replica receives an update, eventually every correct replica will receive the update.
Due to eventual communication among the connected component of correct replicas, Byzantine nodes have no attack vector to interfere with eventual delivery of correct updates among correct replicas.
Through the hash-chained maximal lower bounds, replicas can verify the integrity and completeness of the directed sub-poset of the downward closure up until the bottom element on every new update, without a Byzantine node being able to interfere.
As soon as one correct replica can satisfy the await precondition for applying an update, i.e., has received the necessary downward closure for the element to add, it will eventually share the downward closure with all correct replicas, for which the delivery precondition is then also fulfilled eventually so that they can proceed with the effect function as well.
\textbf{Strong convergence:}
Due to \cref{lm:directedness} and the commutativity of set union, all valid updates commute.
Due to the \texttt{effect} function's assertions, only valid updates are applied.
Due to an operation consisting of both its payload as well as the hashes of its maximal lower bounds, the integrity of the directed poset of the downward closure of an element is integrity-protected through hash chaining.
Thereby, a Byzantine node that tries to attack consistency via equivocation with two operations $(y_a, H^a_{\mlb})$ and $(y_b, H^b_{\mlb})$ where either $y_a = y_b$ or $H^a_{\mlb} = H^b_{\mlb}$ is not successful, as all correct replicas will reject neither update as already received, but treat them as separate updates.
As the validity checks are deterministic and the same on all replicas, Byzantine replicas cannot get an update applied on only a proper subset of correct replicas.
\end{proof}

Under the given assumption of eventual communication among correct replicas, neither the operation-based nor the state-based EDP require authenticity of updates on their own terms, i.e., to provide SEC.
As their update functions do not depend on the identity of replicas that created an update, a forged but valid update is just \emph{another} valid update.
However, digital signatures are required as soon as the application requires set elements to contain an identifier for the sending replica, e.g., for access control, which we will discuss in~\cref{sec:ac-outlook}.

\subsection{Causal Extend-Only Directed Poset (CEDPs)}
We denote the subtype of Extend-only Directed Posets with temporal events as set items together with their causal relation, in form of their happened-after-equal relation, as Causal Extend-only Directed Posets (CEDPs).
As an EDP subtype, \cref{th:edp-sec} also applies.
Hash chaining can represent any upward-extend-only partial ordering, but inherently proves that the image happened-after the preimage, and thereby
the happened-after relation in a Byzantine-tolerant way.

CEDPs are the set-based version, i.e., the reflexive-transitive closure, of causal Event DAG approaches like the Matrix Event Graph (MEG)~\cite{matrix-spec}.
A specific proof that the MEG is a Byzantine-tolerant CRDT is found in~\cite{JacobBeerHenze2021_1000129939}.
In~\cite{kleppmann2020byzantine}, an Event DAG approach is used as transport layer to make arbitrary CRDTs Byzantine-tolerant.
Their arguments for Byzantine tolerance and broadcast requirements also apply to CEDPs.

While the causal set approach to discrete, logical time is well-known in quantum physics~\cite{bombelli1987space}, in distributed systems, the happened-before relation as defined by Lamport~\cite{lamport2019time} is common.
The happened-after-equal relation is the converse of the reflexive closure of the happened-before relation, and thereby the relations are interchangeable.
The happened-after-equal relation however represents the direction of hash chaining and an ordering-based notion of equality.
In Physics, an event is defined as a point in space-time.
An operation $o = (x, H_{\mlb})$ can be read as “event $x$ happened at discrete logical time coordinate $H_{\mlb}$ and discrete space coordinate of the (implicit) replica identifier”.
The happened-after order is a superset of the causal order, i.e., the order in which events causally depend on each other.
Either the application provides new events and their maximal lower bounds in causal order, or we use the happened-after order via $\max(U)$ as maximal lower bounds, which also covers the actual causal order.

\section{Replicated Data Types Derived from EDPs}
\subsection{EDP-based Maps (EPMs)}
\label{sec:epm}
\subsubsection{EPM Specification}
\label{sec:epm-spec}
Based on the EDP replicated data type, we derive a map replicated data type we call EPM.
Essentially, the universe $\mathbb{X}$ now represents key-value pairs, and as before the relation $\mathbb{R}$ defines how update operations are ordered.
The EPM has a `largest-element-wins' semantics with respect to an ordering of update operations based on $\mathbb{R}$.

Formally, we define $\mathbb{M} \subseteq \mathbb{X}$ as the set of valid key-value pairs of the form $(k \mapsto v)$, and a map $M \subseteq \mathbb{M}$ as `injective' subset of all valid key-value pairs, i.e., a given key only has one unique associated value.
We define the square bracket operator to query keys for map $M$ as $(k \mapsto v) \in M \Leftrightarrow M[k] = v$ and a map update operator $\uplus$ for single-element upward extensions $u$ that keeps injectiveness of~$M$:
\begin{align*}
\forall u \in \mathbb{R}&, x = \max(X(u)): \\ M \uplus u
:= & \begin{cases}
M \setminus \{k \mapsto \_\} \cup \{k \mapsto v\} & \text{if } x = (k \mapsto v) \in \mathbb{M}\\
M & \text{if } x \in \mathbb{X} \setminus \mathbb{M}
  \end{cases}
\end{align*}

The functions of the EPM replicated data type are given in \cref{alg:epmrdt}.
The core of the algorithm is the \emph{linearize} function, which linearizes a set $T \subseteq U$ partially-ordered by $\subseteq$ to a sequence $T_n$.
We define a relation $R_\parallel$ of all pairs of upward extensions $(u_1, u_2) \in T^2$ that are `parallel', i.e., cannot be compared using $\subseteq$, and are also minimal in the sense that no elements smaller than $u_1$ exist that cannot be compared to $u_2$, and vice-versa for $u_2$.
% preimage resistance: take an arbitrary hash value < h(u_2), then find u_1
% second preimage resistance: find a message that does not order by hash to a correct node's message
% collision resistance: find two malevolent messages that do not order by hash and anything else
Using a preimage and collision resistant cryptographic hash function $h$, we define a strict linear order relation $R_h = \{ (u_1, u_2) \in \mathbb{R}^2 | h(u_1) < h(u_2) \}$.
The relation $R_\parallel \cap R_h$ then contains the necessary tie-breakings for the partial ordering $\subseteq$ to gain the linear ordering $R_l = (\subseteq \cup (R_\parallel \cap R_h))^+$.
The hash function allows resolving ties without Byzantine nodes being able to compromise the outcome.

\begin{algorithm}[b]
\caption{Operation-Based EPM Replicated Data Type}
\label{alg:epmrdt}
\begin{algorithmic}

\ReplicaState{set of upward extensions $U \in \mathcal{P}(\mathbb{R})$}
\Initial{$U \gets \{ u_\bot = \{ (x_\bot, x_\bot) \} \}$}
\Statex

\Update{put}{$(k \mapsto v) \in \mathbb{M}$}
  \State $y \gets (k \mapsto v)$
  \State $\mathfn{extend}(u(y, \bigcup \max(U)))$ \Comment{function of \cref{alg:edprdt}}
\EndUpdate

\Fn{linearize}{$T \subseteq U$}{$T_n \in \mathbb{R}^n$}
  \vspace{-.3em}
  \begin{align*}
    R_\parallel \gets \{&(u_1, u_2) \in T^2 | u_1 \nsubseteq u_2 \land u_2 \nsubseteq u_1\\
    & \land \forall u \subsetneq u_1: u \subsetneq u_2 \land \forall u \subsetneq u_2: u \subsetneq u_1 \}
    \end{align*}
  \State $R_l \gets (\subseteq \cup (R_\parallel \cap R_h))^+$ \Comment{order $R_\parallel$ via $R_h$,}\\
  \Comment{$^+$ denotes the reflexive-transitive closure}
  \State $T_n \gets \mathfn{enumerate}(T, R_l)$ \Comment{set $T \to$ sequence $T_n$}
\EndFn

\Query{get}{$T \subseteq U$}{$M \in \mathbb{M}$}
  \State $T_n \gets \mathfn{linearize}(T^{\downarrow U})$
  \State $M \gets T_0 \uplus T_1 \uplus \mathellipsis \uplus T_n$
\EndQuery

\end{algorithmic}
\end{algorithm}

Note that Matrix also provides maps based on the CEDP similar to \cref{sec:epm} to assign additional attributes to replicas and replicated objects, and uses them to provide access control, which we will discuss in~\cref{sec:ac-outlook}.

\subsubsection{Byzantine Eventual Consistency Verification}
\begin{theorem}
  Under the assumption of a connected component of all correct replicas and eventual communication among the component, an EPM is an op-based Conflict-free Replicated Data Type
  even in face of an arbitrary number of Byzantine replicas.
\end{theorem}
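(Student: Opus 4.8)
The plan is to reduce Byzantine eventual consistency of the EPM to that of the underlying operation-based EDP, already established in \cref{th:edp-sec}, and then argue that the read-side \emph{linearize}/\emph{get} machinery is a deterministic, state-local function that cannot break any of the three SEC properties. Since the only update is \emph{put}, which simply calls \texttt{extend} from \cref{alg:edprdt} with the element $y = (k \mapsto v)$ and maximal-lower-bound set $\bigcup \max(U)$, the write path is literally an EDP update. Thus \textbf{Self-update} and \textbf{Eventual update} follow immediately: they are properties of the replicated state $U$ alone, and \cref{th:edp-sec} already guarantees that every \texttt{extend} generated by a correct replica is applied to its own state and eventually to all correct replicas' states, with no Byzantine interference beyond emitting further (well-formed-or-rejected) extensions.

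The remaining work is \textbf{Strong convergence}. Here I would argue that for any two correct replicas with the same applied update set, the states $U$ coincide (by \cref{th:edp-sec}), hence $\max(U)$, $R_h$, $R_\parallel$, and $R_l$ coincide, hence $\mathfn{linearize}$ returns the identical sequence $T_n$, and hence $M = T_0 \uplus T_1 \uplus \cdots \uplus T_n$ is identical. The two points needing care are: (i) $R_l = (\subseteq \cup (R_\parallel \cap R_h))^+$ is in fact a \emph{total} order on $T$, so that $\mathfn{enumerate}$ is well-defined — this needs that $R_\parallel$ covers exactly the pairs incomparable under $\subseteq$ at the "frontier" where a tie-break is first needed, and that transitive closure then propagates the tie-break consistently upward; since $U$ is a $u_\bot$-directed poset (\cref{lm:directedness}), every two elements have a common lower bound, so the recursion on $R_\parallel$ bottoms out and no cycle can form because $h$ is injective on the relevant finite element set (collision resistance gives this with overwhelming probability, and we may assume it outright as the paper does elsewhere); and (ii) $\uplus$ is applied in that total order, and because the semantics is "largest-element-wins" with respect to $R_l$, the folded result depends only on the \emph{order}, not on the particular enumeration procedure — so determinism of $\mathfn{enumerate}$ on a fixed total order suffices.

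I would then close with the Byzantine remark: since $\mathfn{linearize}$ and $\uplus$ are pure functions of the local state $U$ and the fixed hash function $h$, a Byzantine replica's only lever is the content of $U$, which is already an EDP governed by \cref{th:edp-sec}; an equivocating or spamming Byzantine node can add more key-value elements or more parallel branches, but these are just additional valid EDP elements, re-linearized identically by every correct replica, so the attacker can degrade performance (longer $T_n$, more tie-breaks) but cannot make two correct replicas with the same update set disagree on $M$. The main obstacle I anticipate is point (i) — rigorously showing that $R_l$ is a genuine total order on every reachable $T^{\downarrow U}$, in particular that the "minimality" clause in the definition of $R_\parallel$ interacts correctly with transitive closure so that incomparable pairs higher up inherit a \emph{consistent} tie-break rather than conflicting ones; everything else is bookkeeping on top of \cref{th:edp-sec} and \cref{lm:directedness}.
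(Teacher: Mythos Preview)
Your proposal is correct and follows essentially the same route as the paper: reduce to the Byzantine SEC of the underlying op-based EDP (\cref{th:edp-sec}), observe that \texttt{put} is literally an \texttt{extend} call so Self-update and Eventual update are inherited, and argue Strong convergence from the fact that \texttt{get}, \texttt{linearize}, and $\uplus$ are deterministic pure functions of the shared state $U$. The only difference is depth: the paper's proof sketch simply asserts determinism of \texttt{linearize}, whereas you explicitly flag the technical obligation (your point (i)) that $R_l$ is a genuine total order on $T^{\downarrow U}$ --- the paper does not discharge this, so your identification of it as the main remaining obstacle is accurate and, if anything, more careful than the original.
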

\begin{proof}
By reduction to the Byzantine strong eventual consistency of the underlying op-based EDP.
In a correct replica, the key-value pair is put in context through the set of maximal elements $\max(U)$ as maximal lower bounds of $(k \mapsto v)$, which satisfies the await-precondition and keeps the \textbf{Self-Update property}.
Nothing has changed in the broadcasting and application of updates, and the \texttt{linearize} function does not interact with other replicas but takes all updates applied in the EDP into account, which keeps \textbf{Eventual Delivery}.
The \texttt{get} and \texttt{linearize} functions as well as the map update operator $\uplus$ are deterministic and ignore invalid updates, so given the same downward-directed poset $(U, \supseteq)$, i.e. the same state of the EDP, they return the same map $M$, maintaining \textbf{Strong convergence}.
\end{proof}

\subsection{Outlook on Systemic Access Control}
\label{sec:ac-outlook}
Access control is usually enforced by a centralized entity in a strongly consistent way.
In a distributed, weakly-consistent setting, we have to embrace that time is only a partial ordering, events and administrative changes happen concurrently, and there is no such thing as consensus on a total order of events or on which policies are in in effect “now”~\cite{weber2016access}.

Previous works on access control for CRDTs mainly focus on filesystem-like cryptographically-enforceable access control in closed groups~\cite{rault2022distributed,yanakieva2021access},
while our outlook is inspired by the granular authorizations and administrative permissions of  Matrix~\cite{matrix-spec} and similar systems~\cite{lf-auth}.

The idea is to provide {\em systemic} access control, i.e.,
storing attributes needed for policies as well as policies itself in the types, thereby allowing integration of access decisions in the CRDT functions and gaining a decentralized enforcement.
While the bare EDP formalizations from~\cref{sec:edp-spec} do not require authenticity of updates, decentralized enforcement requires that a replica can prove to another replica that a third replica was responsible for an update, and thereby requires transferable authentication through digital signatures.
Such a systemic access control can be constructed by a composition of a CEDP -- to gain a concept of logical time through the happened-after relation -- and multiple EPMs.

The main challenge is to treat concurrent administrative changes securely.
To deal with concurrent, conflicting changes, the key idea is that a concurrent update is never \emph{rejected} if it was authorized for its downward closure, it just might be ignored on linearization if another change wins.
A prototype model of such a composed data type
is defined and discussed in~\cref{sec:acedpm}, together with a proof sketch of the CRDT property.

\section{Conclusion \& Future Work}
  % Claim: "Causal Event DAG CRDTs for Byzantine Environments should be generalized as Extend-only Directed Partially-ordered Sets"
Based on our mathematical formalization around partially-ordered sets, we presented Extend-only Directed Posets (EDPs) as
unifying generalization of ideas around DAG CRDTs in Byzantine environments. %, and as a class of Byzantine-tolerant CRDTs of its own right.
The state-based formalization shows the essence of these CRDTs, while the operation-based formalization makes the optimization for efficiency explicit.
Based on EDPs, we can derive modular building blocks and compose complex Byzantine-tolerant CRDTs with ease, as exemplified by the map data type and indicated by the use case for systemic decentralized access control.
In composition, these building blocks may be the first capable of an access-control-included formalization of the CRDT-based Matrix messaging system that aspires among public sectors~\cite{matrix-holiday-2022}.

This work-in-progress provides a step towards formal verification of strong eventual consistency in Byzantine environments of both EDP-based system designs as well as corresponding implementations (like Matrix), including security properties of decentralized access control.

\newpage{}

\printbibliography{}

\appendix

\clearpage{}
\newpage{}
\section{Integrating Systemic Access Control in a CEDP/EPM Composition}
\label{sec:acedpm}

% Core idea: understand every attribute as its own composed CRDT
% Probably, RBAC would be doable with that as well, if one treats roles and their assignments as independent CRDT
% actions and finding the suitable CRDT for them, instead of trying to fit a map type on it.

In this appendix, we compose a CEDP and one or more EPMs to provide access control for the corresponding replicated data types.
Attributes needed for policy evaluation as well as the policies themselves are stored in the replicated data types and serve as basis for access decision and enforcement. Furthermore, also administrative access control tasks like changing rights and policies are governed by the attributes and policies within the replicated data types. Therefore, one does not need to resort to an external system for administering access control.
We refer to this approach as {\em systemic} access control and call
the composed data type {\it Access-Controlled Extend-only Directed Poset and Maps} (ACEDPM).
The approach is inspired by the way the Matrix messaging platform handles access control.

Access control is defined in terms of subjects that perform different actions on objects.
In our messaging example, a user subject performs actions like reading and writing chat messages, or querying and changing group membership, on objects like other users or a chat group.
An event is the act of a subject performing an action on an object.
We focus on consistency, and as read actions cannot harm consistency, we restrict ourselves to write actions in the following.
We store (write) events as set elements in the data type.
To gain a concept of logical time, we store the happened-after-equal relation among stored events, i.e., a CEDP.
One-off events like chat messages are stored directly in the CEDP, while change events of attributes like group membership are stored via one or more EPMs on top of the CEDP.

Whoever initiates the CEDP and EPM replicated objects also defines “the rules” of the composed ACEDPM object, i.e., the initial attributes and policies as foundation on which all further regular and administrative events have to be authorized.
Classical access control thinking assumes a centralized, strongly-consistent access control enforcement based on data and policy changes ordered by linear, real-world time.
The challenge here is that CEDP and EPM only provide strong eventual consistency (SEC) as trade-off for availability under partition, and we want to integrate access control while keeping that trade-off. %, i.e., we cannot strengthen consistency.
In decentralized, weakly-consistent access control, different replicas do not necessarily make the same decision at one point of real-world time, as data is not necessarily consistent and changes are ordered by partial, logical time.
However, if the composed ACEDPM is still a Byzantine-tolerant CRDT, i.e., can still provide SEC, its state and access control is still secure in the sense that Byzantine replicas cannot force correct replicas to diverge in their state and access decisions permanently.
The challenge is to treat concurrent administrative events securely.
The ACEDPM keeps all concurrent events and linearizes them.
We explain the approach in \cref{sec:acedpm-spec}, and show a proof sketch for SEC in~\cref{sec:acedpm-sec}.
%However, while we restrict ourselves to a causal order as main partial order, we do not want to exclude other possibilities that might work better in some use cases, like using an ordering of participants as main partial order.

\subsection{ACEDPM Specification}
\label{sec:acedpm-spec}
An ACEDPM deals with concurrent events in a CRDT-usual way: it accepts \emph{all} valid events, and then breaks ties among concurrent events, here using an access-control-based linear ordering among participants.
We assume a peer-to-peer system architecture where one replica corresponds to a single user with full control.
An ACEDPM object stores events that consist of an action $\mathvar{act}$ of subject $\mathvar{sbj}$ (the sending replica) and the action's content $\mathvar{cnt}$.
All events have the ACEDPM replicated object as implicit object, and may have an explicit object $\mathvar{obj}$, e.g., another user.
We represent events as tuples $x = (\mathvar{act}, \mathvar{sbj}, \mathvar{cnt})$, or as $x = (\mathvar{act}, \mathvar{sbj}, \mathvar{obj} \mapsto \mathvar{cnt})$ if an explicit object is present.
A few examples:
Alice writing a text message is $(\mathval{chat}, \mathval{Alice}, \mathval{"Hi!"})$.
If Alice changes group membership to include Bob, we get $(\mathval{membership}, \mathval{Alice}, \mathval{Bob} \mapsto IN)$.
We get $(\mathval{level}, \mathval{Alice}, \mathval{Alice} \mapsto 2342)$ if Alice sets her own access level to 2342.
As replicas need to verify the correctness of the subject identifier $\mathvar{sbj}$ and be able to prove its correctness to others, events require digital signatures.

In the following, we use two EPMs on top of the CEDP causal event set to integrate access control: one map $M$ for a group membership attribute and one map $L$ for an access level attribute.
The first map $M$ is accessed via a \emph{getM} variant of \emph{get} query function from \cref{alg:epmrdt}.
Possible membership values are for example $\{OUT, IN, INVITE, BAN\}$.
For authorization, we will require a subject to be $IN$, but e.g. Matrix also constrains membership state transitions.
The second map $L$ is used for an access level attribute of subjects and actions, accessed via \emph{getL}.
Levels are integers associated with subjects and actions, and allow a subject to store events for all actions with less or equal level.
In addition, levels impose an access-control-based linear ordering on subjects and objects, as well as their respectively sent events.
For more details on a possible access control system, please cf.~Level- and Attribute-based Access Control as defined in~\cite{JacobBeckerGrashoefer2020_1000120022}.

For systemic access control, we change the EDP and EPM models to use an authorization function $\mathfn{authorized}(u, T)$ for update $u \in \mathbb{R}$ and point in logical time $T \in \mathcal{P}(\mathbb{R})$ before applying updates, and an access-control-based priority relation $R_a$ to break ties.
We first explain why and where changes have to be made in the models, and then show algorithms for $\mathfn{authorized}(u, T)$ and $R_a$.
In the \emph{effect} function of the EDP (c.f.~\cref{alg:edprdt}), we have to assert $\mathfn{authorized}(u, \mlb(u))$ to ignore received updates that are not authorized by their maximal lower bounds.
In the \emph{get} function of the EPM (c.f.~\cref{alg:epmrdt}), a linearized update $T_i$ also must only be applied if $\mathfn{authorized}(T_i, \bigcup_{j=0}^{j<i} T_{j})$, to ignore stored updates that are not authorized by updates preceding the linearization.
Finally, the priority relation $R_a$ has to be used instead of the hash-based ordering $R_h$ in the \emph{linearize} function of the EPM, as $R_h$ orders updates deterministic but arbitrarily.
The strict linear ordering $R_a$ instead prioritizes revocations, and all actions according to their subject's level, before falling back to $R_h$.

\begin{algorithm}
\caption{Level- and Attribute-Based Authorization}
\label{alg:leabac-authorization}
\begin{algorithmic}
% requires u as parameter, as it also depends on the order of events and not only on the event's content whether it is valid or not
\Fn{authorized}{$u \in \mathbb{R}, T \in \mathcal{P}(\mathbb{R})$}{$a \in {0, 1}$}
  \State \Comment{whether $u$ is authorized at point in logical time $T$}
  \State $x \gets \max(X(u)), M \gets \mathfn{getM}(T), L \gets \mathfn{getL}(T)$
  \State $\mathvar{group}_a \gets M[x.\mathvar{sbj}] = IN$
  \State $\mathvar{action}_a \gets L[x.\mathvar{act}] \leq L[x.\mathvar{sbj}]$
  \State $\mathvar{object}_a \gets x.\mathvar{sbj} = x.\mathvar{obj} \lor L[x.\mathvar{obj}] < L[x.\mathvar{sbj}]$
  \State $\mathvar{level\_cap} \gets x.\mathvar{act} = \mathval{level} \Rightarrow x.\mathvar{cnt} \leq L[x.\mathvar{sbj]}$
  \State \Comment{For "set level" actions, cap new level by subject level}
  \State $a \gets \mathvar{group}_a \land \mathvar{action}_a \land \mathvar{object}_a \land \mathvar{level\_cap}$
\EndFn
\end{algorithmic}
\end{algorithm}

The \emph{authorized} function shown in \cref{alg:leabac-authorization} decides whether an upward extension $u$ is authorized
by the attributes at the logical point in time $T$.
Event $x$ is authorized if the subject is a) member of the group, b) authorized for the action, c) authorized for acting on the (optional) object, and d) does not exceed their access level.
The subject needs a level strictly greater than the object to prevent conflicting permission revocations to create inconsistencies: Same-level subjects cannot revoke each others' permissions.

The access-control-based priority relation $R_a$ shown in \cref{alg:acedpm-priority} is a strict linear ordering among all upward extensions $\in \mathbb{R}$.
In first order, $R_a$ prioritizes revocation actions over other actions that cannot reduce permissions, to ensure that revocations are applied before other concurrent upward extensions.
In second order, for actions that are either both revocations or both other actions, it prioritizes actions according to their subject's level, to ensure that actions of subjects with higher level are applied before.
In third order, for actions that are from subjects of equal level, $R_a$ falls back to deterministic but arbitrary hash-based ordering.

\begin{algorithm}
\caption{Access-Control-Based Priority Relation $R_a$}
\label{alg:acedpm-priority}
\begin{algorithmic}
\State $R_a \gets \{ (u_1, u_2) \in \mathbb{R}^2 | \mathfn{prior}_a(u_1, u_2) \}$
\Fn{$\mathfn{prior}_a$}{$u_1, u_2 \in \mathbb{R}$}{$b \in \{0, 1\}$}
  \State \Comment{whether $u_1$ is prior to $u_2$ regarding access control}
  \State $x_1 = \max(X(u_1)), x_2 = \max(X(u_2))$
  \State $L_1 \gets \mathfn{getL}(\mlb(u_1)), L_2 \gets \mathfn{getL}(\mlb(u_2))$
  \State $b \gets \mathfn{rvc}(u_1) \land \neg \mathfn{rvc}(u_2)$ \Comment{prioritize revocations}
  \If{$\mathfn{rvc}(u_1) = \mathfn{rvc}(u_2)$}
    \State $b \gets L_1[x_1.\mathvar{sbj}] < L_2[x_2.\mathvar{sbj}])$
    \State \Comment{if both / neither is revocation, order by level}
    \If{$L_1[x_1.\mathvar{sbj}] = L_2[x_2.\mathvar{sbj}]$}
      \State $b \gets h(u_1) < h(u_2)$
      \State \Comment{if equal subject level, order by hash}
    \EndIf
  \EndIf
\EndFn
\Statex
\Fn{rvc}{$u \in \mathbb{R}$}{$b \in {0, 1}$} \Comment{whether $u$ is revocation}
  \State $x \gets \max(X(u))$
  \State $M_\mathvar{pre} \gets \mathfn{getM}(\mlb(u)), M_\mathvar{post} \gets M_\mathvar{pre} \uplus u$
  \State $L_\mathvar{pre} \gets \mathfn{getL}(\mlb(u)), \quad L_\mathvar{post} \gets M_\mathvar{pre} \uplus u$
  \State
  $b \gets (M_\mathvar{pre}[x.\mathvar{obj}] = IN \land M_\mathvar{post}[x.\mathvar{obj}] \neq IN)$
  \State $\hspace*{2em} \lor L_\mathvar{pre}[x.\mathvar{obj}] > L_\mathvar{post}[x.\mathvar{obj}]$
\EndFn
\end{algorithmic}
\end{algorithm}

By prioritizing revocations in the linearization of map updates, revocations gain “concurrent+causal for-each” semantics, a concept Weidner identified\footnotemark{} based on~\cite{weidner2020composing}: a revocation acts against causally succeeding as well as concurrent map updates.
When using the linearization of all elements as query function, revocations gain these desired semantics for all ACEDPM updates, i.e., for both CEDP and EPMs.
\footnotetext{\url{https://mattweidner.com/2022/02/10/collaborative-data-design.html}}

\subsection{Byzantine Strong Eventual Consistency of ACEDPM}
\label{sec:acedpm-sec}

One might wonder whether concurrent revocations against each other could break eventual delivery or strong convergence.
However, SEC is ensured because a concurrent update is never \emph{rejected} if it was authorized by its downward closure, it just might be ignored in the linearization phase in case the other revocation wins.
Therefore, all correct replica states will eventually contain both revocations and their downward closure, regardless of reception and application order.
Linearization is only based on the relation of revocations to each other and their respective downward closure, whereby all correct replicas will eventually reach the same decision.

\begin{theorem}
  Under the assumption of a connected component of all correct replicas and eventual communication among the component,
  the ACEDPM is an op-based Conflict-free Replicated Data Type
  even in face of an arbitrary number of Byzantine replicas.
\end{theorem}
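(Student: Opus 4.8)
The plan is to follow the structure of \cref{th:edp-sec}, verifying the three SEC properties --- self-update, eventual update, and strong convergence --- while accounting for the two modifications made for systemic access control: the assertion $\mathfn{authorized}(u, \mlb(u))$ added to the \emph{effect} function of the EDP (\cref{alg:edprdt}), the guard $\mathfn{authorized}(T_i, \bigcup_{j<i} T_j)$ added to the fold inside \emph{get} (\cref{alg:epmrdt}), and the replacement of $R_h$ by the priority relation $R_a$ (\cref{alg:acedpm-priority}) in \emph{linearize}.

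First I would treat self-update and eventual update, which are largely inherited. A correct replica issuing an event via \emph{put}/\emph{extend} attaches $\max(U)$ as maximal lower bounds, so the await precondition is satisfied at once; a correct replica only issues events authorized by its present state, hence by exactly the downward closure it attaches, so the new $\mathfn{authorized}$ assertion in \emph{effect} passes and the event is applied to its own state. Byzantine replicas have no handle on this local process, so self-update holds. For eventual update, the broadcast of \cref{sec:broadcast} is unchanged, so once one correct replica holds an update it, together with its hash-chained downward closure, eventually reaches every correct replica. The point that keeps access control from harming liveness is that $\mathfn{authorized}(u, \mlb(u))$ is evaluated against the \emph{fixed} downward closure of $u$, which by EDP finality never changes; thus an update applied by one correct replica is authorized by data that every correct replica eventually and permanently holds, so every correct replica eventually applies it. In particular, a concurrent event that loses a tie in \emph{linearize} is still \emph{stored} in the EDP/EPM state --- it is only skipped in \emph{get}'s fold --- so no divergence in stored state arises, and this already resolves the worry about mutually-revoking concurrent updates.

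The crux, and where I expect the real work, is strong convergence, because $\mathfn{authorized}$, $R_a$, $\mathfn{prior}_a$, and $\mathfn{rvc}$ are defined by mutual recursion through $\mathfn{getM}$, $\mathfn{getL}$, and \emph{linearize}. I would first establish well-foundedness: every recursive call is evaluated either at $\mlb(u) \subsetneq U$ --- a strictly smaller set in the finite $\subseteq$-ordered poset $(U, \supseteq)$ --- or at a strict prefix $\bigcup_{j<i} T_j$ of a linearization, so the recursion bottoms out at $u_\bot$ and defines total functions. Next I would show these are \emph{pure functions of the EDP state} $U$ alone: $R_a$ restricted to $U$ depends only on $\mathbb{R}$, the fixed hashes, the $\subseteq$-order, and $\mathfn{getL}$-values at fixed maximal lower bounds; it is a strict total order, so $R_l = (\subseteq \cup (R_\parallel \cap R_a))^+$ is a linear extension of $\subseteq|_U$ uniquely determined by $U$, whence \emph{linearize} and the $\mathfn{authorized}$-guarded fold in \emph{get} are deterministic; $\mathfn{authorized}(u, \mlb(u))$ likewise. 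By \cref{lm:directedness} the EDP state stays $U^\bot$-directed under valid updates, and valid updates still commute in $\mathcal{P}(\mathbb{R})$ under set union; since \emph{effect}'s assertions --- now including $\mathfn{authorized}$, which is the same deterministic test on every replica --- admit exactly the valid updates, two correct replicas that have applied the same set of updates hold the same $U$, hence the same $M$, $L$, and query results. Finally, Byzantine equivocation is handled as in \cref{th:edp-sec}: a forged event is either rejected by the deterministic $\mathfn{authorized}$ check uniformly on all correct replicas, or is valid and merely another concurrent update; in neither case can a Byzantine replica get a valid update applied on only a proper subset of correct replicas, and because $R_a$ orders revocations before non-revocations and then by subject level, concurrent --- including mutually conflicting --- revocations are linearized identically everywhere.
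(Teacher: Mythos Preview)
Your proof is correct and follows essentially the same route as the paper: reduce to the SEC of the underlying CEDP/EPM and verify self-update, eventual update, and strong convergence, with the same key observations (authorization is evaluated at the fixed $\mlb(u)$, and linearization is deterministic and order-independent). Your treatment of strong convergence is in fact more careful than the paper's sketch --- you make explicit the well-foundedness of the mutual recursion among $\mathfn{authorized}$, $R_a$, $\mathfn{getL}$, and \emph{linearize}, which the paper simply asserts as ``deterministic'' without justification.
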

\begin{proof}
Based on SEC of the underlying CEDP/EPM.
\textbf{Self-update:}
A correct replica only generates updates that pass its local, current authorization checks.
Thereby, authorization does not hinder the replica from immediate self-update.
\textbf{Eventual update:} An update generated by a correct replica is authorized based on its maximal lower bounds.
Receiving correct replicas check whether an update is authorized only based on its maximal lower bounds.
Thereby, they only filter out invalid updates that cannot have been sent by a correct replica.
But they apply all updates authorized by their maximal lower bounds, even if a concurrent update has already revoked the permissions in the EPM state based on the receiving replica state's maximal elements.
Resolving conflicts is moved towards linearization, and not solved through not applying updates eventually.
\textbf{Strong convergence:} For the CEDP, the authorization is checked before applying an update, whereby authorization cannot make two replicas differ that have applied the same updates.
For the EPMs, there is an authorization check inside the linearization that ignores updates if they are not authorized in linearization order.
But, linearization is deterministic and independent from the order in which updates were applied.
Therefore, replicas in the same CEDP state come to the same linearization, and thereby also to the same EPM states.
\end{proof}

\end{document}